\newtheorem{lemma}{Lemma}
\begin{document}
%%%%%%%%%%%%%%%%%%%%%%%%%%%%%%%%%%%%%%%
\title{Accurate Performance Analysis of Opportunistic Decode-and-Forward Relaying\thanks{This work was supported by the Qatar National Research Fund (A member of Qatar Foundation). Kamel Tourki is with the Electrical and Computer Engineering Program, Texas A\&M University at Qatar, Education City, Doha, Qatar. Hong-Chuan Yang is with the Department of Electrical and Computer Engineering, University of Victoria, BC, Canada. Mohamed-Slim Alouini is with the Electrical Engineering Program, Division of Physical Sciences and Engineering, King Abdullah University of Science and Technology (KAUST), Thuwal, Mekkah Province, Kingdom of Saudi Arabia. The corresponding author is Kamel Tourki (kamel.tourki@qatar.tamu.edu).}}
\author{Kamel~Tourki,~\IEEEmembership{Member,~IEEE,}
        Hong-Chuan~Yang,~\IEEEmembership{Senior Member,~IEEE,}
        Mohamed-Slim~Alouini,~\IEEEmembership{Fellow,~IEEE}.
        }
\maketitle
%%%%%%%%%%%%%%%%%%%%%%%%%%%%%%%%%%%%%%%%%%%%%%%%%%%%%%%%%%%%%%%%%%%%%
\begin{abstract}
In this paper, we investigate an opportunistic relaying scheme where the selected relay assists the source-destination (direct) communication. In our study, we consider a regenerative opportunistic relaying scheme in which the direct path can be considered unusable, and takes into account the effect of the possible erroneously detected and transmitted data at the best relay. We first derive statistics based on exact probability density function (PDF) of each hop. Then, the PDFs are used to determine accurate closed form expressions for end-to-end bit-error rate (BER) of binary phase-shift keying (BPSK) modulation. Furthermore, we evaluate the asymptotical performance analysis and the diversity order is deduced. Finally, we validate our analysis by showing that performance simulation results coincide with our analytical results over different network architectures.
\end{abstract}
%%%%%%%%%%%%%%%%%%%%%%%%%%%%%%%%%%%%%%%%%%%%%%%%%
\begin{center}
    \textbf{Keywords}:
\end{center}
Cooperative diversity, Opportunistic Regenerative relaying, Performance analysis.
\newpage
%%%%%%%%%%%%%%%%%%%%%%%%%%%%%%%%%%%%%%%%%%%%%%%%%%%%%%%%%%%%%%%%%%%%%
\section{Introduction}
\label{Intro}In many wireless applications, users may not be able to support multiple antennas due to size, complexity, power, or other constraints. The wireless medium brings along its unique challenges such as fading and multiuser interference. This can be mitigated with cooperative diversity \cite{LaW,SEA1,SEA2}, which is becoming very attractive for small-size, antenna-limited wireless devices. Opportunistic relaying (OR) technique has been proposed where only the best relay from a set of $K$ available candidate relays is selected to cooperate~\cite{Bal06,TAH09,Kal09,Bal10}. With this technique, the selection strategy is to choose the relay with the best equivalent end-to-end channel gain which is calculated as the minimum of the channel gains of the first and the second hops under decode-and-forward (DF) protocol or with the best harmonic mean of both channel gains under amplify-and-forward (AF) protocol. However, some works have chosen the best relay-destination link as possible selection criteria~\cite{TAH09,IA09}.\\
Previous works have largely focused on information theoretic aspects of OR and derived outage performance results of such systems. Some of these analysis are accurate only at high signal to noise ratio (SNR)~\cite{Bcal06,Dal08,Bal08,ZL09}. Particularly in \cite{Bcal06}, the end-to-end outage probability analysis of opportunistic relaying without direct link between source and destination nodes was presented. In addition, several works have considered the OR scheme under DF protocol in Rayleigh fading environment, where only the upper bound for the statistics of the best relay local SNR\footnote{The statistic refers to the probability density function (PDF) of the received SNR at the destination, called $\gamma_{r_*d}$, from the best relay $r_*$.} was obtained~\cite{GDC08,Adal08}. Moreover, performance analysis of single relay selection for DF protocols were proposed in~\cite{MK08,FU09,NB09,Cal10}. In~\cite{MK08}, Michalopoulos and Karagiannidis proposed closed-form expressions for the outage and bit error probability (BEP). However, the activated relay is selected from a decoding set, so that the input signal-to-noise ratio (SNR) is compared to a threshold before forwarding, and the diversity order was not derived explicitly. In~\cite{FU09}, Fareed and Uysal considered a relay selection method in a DF multi-relay network where the selected relay cooperates only if the SNR of the source-destination (direct) link is less than the minimum of the channel gains of the first and the second hops. The authors proposed an approximated closed-form symbol error rate (SER) expression. Recently, Nikjah and Beaulieu in~\cite{NB09} offered the first exact performance analysis of opportunistic DF relaying. However~\cite{NB09} focused on outage probability and ergodic capacity performance metrics and the end results were expressed in integral forms. However, in~\cite{Cal10} , Chen~\textit{\textit{e}t. al} derived only approximate symbol error probability (SEP) expression in integral form for opportunistic DF relaying.
%%%%%%%%%%%%%%%%%%%%%%%%%%%%%%%%%%%%%
\subsection{Contributions of this Paper}
\label{Contri}In this paper we consider a half duplex DF-based cooperative two-hop communications where an opportunistic relaying problem is considered. We state that the objective of this paper is not to revisit path selection, but to focus on giving valid accurate analysis over all SNR regimes. In fact, we determine the exact closed-form expressions of the end-to-end bit error rate (BER) where the source may or may not be able to communicate directly with the destination due to the shadowing. In particular, we consider the important effect of the possible erroneously detected and transmitted data at the regenerative relay. Our analytical approach requires that we determine the probability density function (PDF) of the received SNR by and from the selected relay, called $\gamma_{sr_*}$ and $\gamma_{r_*d}$, respectively. To the best of our knowledge, such performance analysis based on exact statistics (explicit form) of each hop has not been considered in the literature, and using the newly derived exact statistics, we investigate the asymptotic error performance and find the diversity order of these systems.
\subsection{Organization of this Paper}
The remainder of this paper is organized as follows. In section~\ref{systmodel}, we introduce the system model and the statistics of each hop. In section~\ref{Perf}, the accurate closed form for the end-to-end BER is derived and the diversity order of each scheme is determined. Finally, the simulation results for symmetric and linear networks are depicted in section~\ref{simulations} while some concluding remarks are given in section~\ref{conclu}.
%%%%%%%%%%%%%%%%%%%%%%%%%%%%%%%%%%%%%%%%%%%%%%%%%%%%%%%%%%%%%%%%%%%
\section{System model}
\label{systmodel}In this section, we describe our proposed cooperative diversity scheme in which the source may or may not be able to communicate directly with the destination, and we note that only a selected relay from a cluster is targeted to cooperate. The source, destination, and relays are denoted as S, D and $r_k$ where $k \in \{1,...,K\}$. We assume that each terminal is equipped with one antenna. We denote $h_{sr_k}$, $h_{sd}$ and $h_{r_{k}d}$ as the coefficients of the channels between the source (S) and the $k^{th}$ relay, the source and the destination (D), and the $k^{th}$ relay and the destination, modeled as flat fading and Rayleigh distributed with variances $\sigma_{sr_k}^{2}$, $\sigma_{sd}^{2}$ and $\sigma_{r_{k}d}^{2}$, respectively.
\subsection{Fixed Selection Cooperative Relaying (FSCR)}\label{FSCR}
The source broadcasts the symbols $s(n)$ which are received by (D) and each relay $r_k$ as
%%%%%%%%%%%%%%%%%%%%%%%%%%%%%%%%%%%%%%%%%%%%%%%%%%%%%%%%%%%%%
\begin{subequations}\label{eqbr1}
\begin{align}
      y_{d}(n) &= \sqrt{E_s} h_{sd} s(n) + n_{d}(n) \label{eqbr1_1}\\
      y_{r_k}(n) &= \sqrt{E_s} h_{sr_k} s(n) + n_{r_k}(n) \label{eqbr1_2},
\end{align}
\end{subequations}
%%%%%%%%%%%%%%%%%%%%%%%%%%%%%%%%%%%%%%%%%%%%%%%%%%%%%%%%%%%%
respectively, where $n_{d}(n)$ and $n_{r_k}(n)$ are the additive-noise symbols at the destination and the $k^{th}$ relay,
respectively, with the same variance $N_{0}$, and $E_s$ is the symbol energy. Hence, we denote $\gamma_{sd} = E_s |h_{sd}|^{2}/N_{0}$ (resp. $\gamma_{sr_k} = E_s |h_{sr_k}|^{2}/N_{0}$) and $\gamma_{r_{k}d} = E_s |h_{r_{k}d}|^{2}/N_{0}$ the instantaneous received SNR at the destination (resp. at the $k^{th}$ relay) from the source and the $k^{th}$ relay, respectively, and $\bar{\gamma}_{sd} = \sigma_{sd}^{2} E_{s}/N_0$, $\bar{\gamma}_{sr_k} = \sigma_{sr_k}^{2} E_{s}/N_0$ are the average received SNR at the destination and the $k^{th}$ relay, respectively. We assume that the relays are close to each other and forming a cluster\footnote{We assume short distances between the relays compared to the distances (S)-cluster, and cluster-(D), respectively.} and we assume that the relays and the destination receive the same average SNRs $\bar{\gamma}_{sr_k}$ and $\bar{\gamma}_{r_{k}d}$ from the source and the relays, respectively. Thus, we denote $\bar{\gamma}_{rd} = \bar{\gamma}_{r_{k}d}$ and $\bar{\gamma}_{sr} = \bar{\gamma}_{sr_k}$ for all $k$.\\
During the second hop, only a selected relay $r_*$ will transmit using the DF protocol,
\begin{equation}\label{eqd}
    y_{d}(n+1) = \sqrt{E_s} h_{r_*d} \widetilde{s}(n) + n_{d}(n+1) ,
\end{equation}
where $\widetilde{s}(n)$ is the decoded and retransmitted signal by the best relay $r_*$ which is selected following the rule
\begin{equation}
    r_* = \arg\max_k\min\left(\gamma_{sr_{k}},\gamma_{r_{k}d}\right),
\end{equation}
where $\min\left(\gamma_{sr_{k}},\gamma_{r_{k}d}\right)$ represents a bottleneck in term of end-to-end capacity\cite{Adal08}. Therefore, the destination combines the received signals from (S) and $r_*$ using a maximum ratio combining (MRC) detector as
\begin{equation}
    y_{c} = \left(h_{sd}\right)^{*} y_{d}(n) + \left(h_{rd}\right)^{*} y_{d}(n+1).
\end{equation}
Hence, the combined SNR at the destination, called $\beta$, is the sum of the two independent SNRs $\gamma_{sd}$ and $\gamma_{r_*d}$ with the corresponding PDFs $p_{\gamma_{sd}}(.)$ and $p_{\gamma_{r_*d}}(.)$ where
%%%%%%%%%%%%%%%%%%%%%%%%%%%%%%%%%%
\begin{equation}\label{pdfSD}
    p_{\gamma_{sd}}(\mathrm{y}) = \frac{1}{\bar{\gamma}_{sd}}~e^{-\mathrm{y}/\bar{\gamma}_{sd}},
\end{equation}
%%%%%%%%%%%%%%%%%%%%%%%%%%%%%%%%%%
and the PDF of $\gamma_{r_*d}$ may be shown to be given by (see Appendix~\ref{AppA}),
%%%%%%%%%%%%%%%%%%%%%%%%%%%%%%%%%%
\begin{equation}\label{eqPDF}
    p_{\gamma_{r_*d}}(\mathrm{x}) = \sum_{i=1}^K \left(_i^K\right)~\frac{(-1)^{i-1}}{\bar{\gamma}_{sr}} \frac{i\bar{\gamma}}{i\bar{\gamma}_{rd}-\bar{\gamma}}\left(e^{-\mathrm{x}/\bar{\gamma}_{rd}} - e^{-i\mathrm{x}/\bar{\gamma}}\right) + \sum_{i=1}^K \left(_i^K\right)~\frac{(-1)^{i-1}}{\bar{\gamma}_{rd}}~i~e^{-i\mathrm{x}/\bar{\gamma}}
\end{equation}
%%%%%%%%%%%%%%%%%%%%%%%%%%%%%%%%%%%%%%%
where $\bar{\gamma} = \frac{\bar{\gamma}_{sr}\bar{\gamma}_{rd}}{\bar{\gamma}_{sr}+\bar{\gamma}_{rd}}$ and $i\bar{\gamma}_{rd} \neq \bar{\gamma}$ $\forall i = 1, 2, \cdots, K$. Therefore, the PDF of $\beta = \gamma_{sd} + \gamma_{r_*d}$ can be obtained by the convolution of the PDF of $\gamma_{sd}$ and $p_{\gamma_{r_*d}}$, as
\begin{equation}\label{pBB}
    p_{\beta}(\beta) = \int_{0}^\beta p_{\gamma_{sd}}(\mathrm{x})p_{\gamma_{r_*d}}(\beta-\mathrm{x}) d\mathrm{x}
\end{equation}
which is expressed in~(\ref{PDFbeta}).
%%%%%%%%%%%%%%%%%%%%%%%%%%%%%%%%%%%%%%%%%%%%%%%%%%%%%%%%%%%%%%%
\begin{figure*}[!t]
\normalsize
\begin{eqnarray}\label{PDFbeta}
    p_{\beta}(\beta) &=& \sum_{i=1}^{K} \left(_i^K\right)~\frac{(-1)^{i-1}}{\bar{\gamma}_{sr}}\frac{i\bar{\gamma}}{i\bar{\gamma}_{rd}-\bar{\gamma}} \left[\frac{\bar{\gamma}_{rd}}{\bar{\gamma}_{sd}-\bar{\gamma}_{rd}}\left(e^{-\beta/\bar{\gamma}_{sd}}-e^{-\beta/\bar{\gamma}_{rd}}\right) - \frac{\bar{\gamma}}{i\bar{\gamma}_{sd}-\bar{\gamma}} \left(e^{-\beta/\bar{\gamma}_{sd}}-e^{-i\beta/\bar{\gamma}}\right)\right] \nonumber \\
    &{+}& \sum_{i=1}^{K} \left(_i^K\right)~\frac{(-1)^{i-1}}{\bar{\gamma}_{rd}}\frac{i\bar{\gamma}}{i\bar{\gamma}_{sd}-\bar{\gamma}}\left(e^{-\beta/\bar{\gamma}_{sd}} - e^{-i\beta/\bar{\gamma}}\right)
\end{eqnarray}
\hrulefill
\vspace*{-4pt}
\end{figure*}
%%%%%%%%%%%%%%%%%%%%%%%%%%%%%%%%%%%%%%%%%%%%%%%%%%
\subsection{Distributed Selection Combining (DSC) Scheme}\label{DSC}
In this scheme, the destination chooses whether to receive from the direct link S-D or the relayed branch according to the instantaneous SNRs $\gamma_{sd}$ and $\gamma_{r_*d}$, respectively. Otherwise, the instantaneous SNR at the output of the selection combining (SC) detector is given by:
\begin{equation}\label{eq1SC}
    \gamma_{DSC} = \max\left(\gamma_{sd},\gamma_{r_*d}\right).
\end{equation}
It can be noted that the statistics of $\gamma_{DSC}$ depends on statistics of $\gamma_{sd}$ and $\gamma_{r_*d}$. In particular, the cumulative density function (CDF) of $\gamma_{DSC}$ is given by
\begin{equation}\label{cdfDSC}
    F_{\gamma_{DSC}}(\mathrm{x}) = F_{\gamma_{sd}}(\mathrm{x})~F_{\gamma_{r_*d}}(\mathrm{x}),
\end{equation}
where the CDFs of $\gamma_{sd}$ and $\gamma_{r_*d}$, denoted as $F_{\gamma_{sd}}(.)$ and $F_{\gamma_{r_*d}}(.)$, respectively, can be derived using the PDFs of $\gamma_{sd}$ and $\gamma_{r_*d}$ in (\ref{pdfSD}) and (\ref{eqPDF}), respectively, as
%%%%%%%%%%%%%%%%%%%%%%%%%%%%%%%%%%%%%%%%%%%%%%%
\begin{equation}\label{eqCDFsd}
    F_{\gamma_{sd}}(\mathrm{x}) = 1-e^{-\mathrm{x}/\bar{\gamma}_{sd}},
\end{equation}
%%%%%%%%%%%%%%%%%%%%%%%%%%%%%%%%%%%%%%%%%%%%%%%
and
%%%%%%%%%%%%%%%%%%%%%%%%%%%%%%%%%%%%%%%%%%%%%%%
\begin{equation}\label{eqCDFrd}
    F_{\gamma_{r_*d}}(\mathrm{x}) = \sum_{i=1}^K \left(_i^K\right)~\frac{(-1)^{i-1}}{\bar{\gamma}_{sr}}~\frac{i\bar{\gamma}}{i\bar{\gamma}_{rd}-\bar{\gamma}}\left[\Psi_{\frac{1}{\bar{\gamma}_{rd}}}(\mathrm{x}) - \Psi_{\frac{i}{\bar{\gamma}}}(\mathrm{x})\right] + \sum_{i=1}^K \left(_i^K\right)~\frac{(-1)^{i-1}i}{\bar{\gamma}_{rd}}~\Psi_{\frac{i}{\bar{\gamma}}}(\mathrm{x}),
\end{equation}
%%%%%%%%%%%%%%%%%%%%%%%%%%%%%%%%%%%%%%%%%%%%%%%
where
\begin{equation}\label{eq5SCR}
    \Psi_a(\mathrm{x}) = \frac{1}{a}~\left(1-e^{-a\mathrm{x}}\right).
\end{equation}
%%%%%%%%%%%%%%%%%%%%%%%%%%%%%%%%%%%%%%%%%%%%%%%%%%
\subsection{Selection Relaying (SR) Scheme}\label{SR}
In this scheme, it is assumed that the direct link is in deep fading. Hence only $r_*$ will receive the information reliably from the source during the first phase given by~(\ref{eqbr1_2}), and the destination decodes only the message coming from $r_*$ as given by~(\ref{eqd}).
%%%%%%%%%%%%%%%%%%%%%%%%%%%%%%%%%%%%%%%%%%%%%%%%%%%%%%%%%%%%%%%%%
\section{BER analysis}
\label{Perf}
\subsection{Fixed Selection Cooperative Relaying}\label{SCRan}
The destination combines the received signals such as the relay can retransmit an erroneously decoded message. The end-to-end probability of error can be expressed as
%%%%%%%%%%%%%%%%%%%%
\begin{equation}\label{eqSCR1}
    P_{e,FSCR} = P_{prop} P_{sr_*} + \left(1 - P_{sr_*}\right) P_{mrc}
\end{equation}
%%%%%%%%%%%%%%%%%%%%%
where $P_{prop}$ denotes the error propagation probability which can be tightly approximated for a BPSK modulation by
%%%%%%%%%%%%%%%%%%
\begin{equation}\label{eqSCR2}
    P_{prop} \approx \frac{\bar{\gamma}_{r_*d}}{\bar{\gamma}_{r_*d} + \bar{\gamma}_{sd}},
\end{equation}
%%%%%%%%%%%%%%%%%%
where $\bar{\gamma}_{r_*d}$, the expected value of $\gamma_{r_*d}$, can be easily verified to be expressed as
%%%%%%%%%%%%%%%%%%
\begin{equation}\label{eqbar}
    \bar{\gamma}_{r_*d} = \sum_{i=1}^K \left(_i^K\right)~\frac{(-1)^{i-1}}{\bar{\gamma}_{sr}}\frac{i\bar{\gamma}}{i\bar{\gamma}_{rd}-\bar{\gamma}}\left[\bar{\gamma}_{rd}^2 - \left(\frac{\bar{\gamma}}{i}\right)^2\right] + \sum_{i=1}^K \left(_i^K\right)~\frac{(-1)^{i-1}}{i\bar{\gamma}_{rd}}~\bar{\gamma}^2 ,
\end{equation}
%%%%%%%%%%%%%%%%%%%%%%%%%%%%%%%%%%%%%%%%%
and $P_{sr_*}$ is the probability of error for the communication link between the source and the relay which is derived by
%%%%%%%%%%%%%%%%%%%%%%%%%%%%%%%%%%%%%%%%%%%%%
\begin{equation}\label{eqSCR3}
    P_{sr_*} = \int_0^{\infty} \frac{1}{2}~\mathrm{erfc}(\sqrt{\mathrm{x}})p_{\gamma_{sr_*}}(\mathrm{x})\mathrm{dx}
\end{equation}
%%%%%%%%%%%%%%%%%%%%%%%%%%%%%%%%%%%%%%%%%%%%%
where $p_{\gamma_{sr_*}}(.)$ can be derived as in~(\ref{eqPDF}) by replacing $\gamma_{r_*d}$ with $\gamma_{sr_*}$ and $\mathrm{erfc}(.)$ is the complementary error function. Hence performing the integration in (\ref{eqSCR3}), and using the following identity
%%%%%%%%%%%%%%%%%%%%%%%%%%%%%%%%%%%%%%%%%%
\begin{equation}
  l(\alpha) \triangleq \frac{1}{2}~\int_0^\infty \mathrm{erfc}(\sqrt{\beta}) e^{-\alpha\beta} \mathrm{d}\beta = \frac{1}{2\alpha}\left[1-\frac{1}{\sqrt{1+\alpha}}\right],
\end{equation}
%%%%%%%%%%%%%%%%%%%%%%%%%%%%%%%%%%%%%%%%%%
$P_{sr_*}$ is found to be expressed as in~(\ref{eqSCR4}).
%%%%%%%%%%%%%%%%%%%%%%%%%%%%%%%%%%%%%%%%%%
\begin{figure*}[!t]
\normalsize
\begin{equation}\label{eqSCR4}
    P_{sr_*} = \sum_{i=1}^{K} \left(_i^K\right)\frac{(-1)^{i-1}}{\bar{\gamma}_{rd}} \frac{i\bar{\gamma}}{i\bar{\gamma}_{sr}-\bar{\gamma}}\left[l\left(\frac{1}{\bar{\gamma}_{sr}}\right) - l\left(\frac{i}{\bar{\gamma}}\right)\right] + \sum_{i=1}^{K} \left(_i^K\right)~\frac{i (-1)^{i-1}}{\bar{\gamma}_{sr}}l\left(\frac{i}{\bar{\gamma}}\right).
\end{equation}
\hrulefill
\vspace*{-6pt}
\end{figure*}
%%%%%%%%%%%%%%%%%%%%%%%%%%%%%%%%%%%%%%%%%%
In~(\ref{eqSCR1}), we need also $P_{mrc}$ which is the error probability of the combined direct and opportunistic paths, given by
\begin{equation}\label{MRCint}
    P_{mrc} = \int_{0}^{\infty} \frac{1}{2}~\mathrm{erfc}(\sqrt{\beta})p(\beta) \mathrm{d}\beta ,
\end{equation}
which is given in~(\ref{eqSCR5}).
%%%%%%%%%%%%%%%%%%%%%%%%%%%%%%%%%%%%%%%%%%
\begin{figure*}[!t]
\normalsize
\setlength{\arraycolsep}{0.0em}
\begin{eqnarray}\label{eqSCR5}
    P_{mrc} &=& \sum_{i=1}^{K} \left(_i^K\right)\frac{(-1)^{i-1}}{\bar{\gamma}_{sr}}\frac{i\bar{\gamma}}{i\bar{\gamma}_{rd}-\bar{\gamma}} \left[\frac{\bar{\gamma}_{rd}}{\bar{\gamma}_{sd}-\bar{\gamma}_{rd}} \left(l\left(\frac{1}{\bar{\gamma}_{sd}}\right)-l\left(\frac{1}{\bar{\gamma}_{rd}}\right)\right)
    -\frac{\bar{\gamma}}{i\bar{\gamma}_{sd}-\bar{\gamma}}\left(l\left(\frac{1}{\bar{\gamma}_{sd}}\right)-l\left(\frac{i}{\bar{\gamma}}\right)\right)\right] \nonumber \\
    &&{+} \sum_{i=1}^{K} \left(_i^K\right)\frac{(-1)^{i-1}}{\bar{\gamma}_{rd}}\frac{i\bar{\gamma}}{i\bar{\gamma}_{sd}-\bar{\gamma}}\left(l\left(\frac{1}{\bar{\gamma}_{sd}}\right)-l\left(\frac{i}{\bar{\gamma}}\right)\right)
\end{eqnarray}
\setlength{\arraycolsep}{2pt}
\hrulefill
\vspace*{-4pt}
\end{figure*}
%%%%%%%%%%%%%%%%%%%%%%%%%%%%%%%%%%%%%%%%%%
Finally, with (\ref{eqSCR2}), (\ref{eqbar}), (\ref{eqSCR4}) and (\ref{eqSCR5}), the end-to-end probability of error can be easily evaluated.
%%%%%%%%%%%%%%%%%%%%%%%%%%%%%%%%%%%%%%%%%%%%%%%%%%%%%
\begin{lemma}\label{lemma1} For a source-destination pair with $K$ potential relays in Rayleigh fading channels, the end-to-end BER of the fixed selection cooperative relaying scheme in the high-SNR regime, is
\begin{equation}\label{eqLemma1}
    P_{e,FSCR} \approx P_{prop}\frac{\Gamma(K+\frac{1}{2})}{2\sqrt{\pi}}\frac{1}{\bar{\gamma}_{sr}}\left(\frac{1}{\bar{\gamma}}\right)^{K-1} + \frac{\Gamma(K+\frac{3}{2})}{2\sqrt{\pi}(K+1)}\frac{1}{\bar{\gamma}_{sd}}\frac{1}{\bar{\gamma}_{rd}}\left(\frac{1}{\bar{\gamma}}\right)^{K-1},
\end{equation}
\end{lemma}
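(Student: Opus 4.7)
The plan is to exploit the exact decomposition $P_{e,FSCR} = P_{prop}P_{sr_*} + (1-P_{sr_*})P_{mrc}$ from (\ref{eqSCR1}) together with the fact that at high SNR $P_{sr_*}\to 0$, so $(1-P_{sr_*})P_{mrc} \approx P_{mrc}$. It therefore suffices to determine the leading-order behaviour of $P_{sr_*}$ and of $P_{mrc}$ separately as all average SNRs tend to infinity in a fixed ratio. The standard route is to identify the leading term of each underlying PDF at the origin (which controls the diversity order) and then integrate against $\frac{1}{2}\mathrm{erfc}(\sqrt x)$ using the elementary identity
\[
\int_0^\infty \tfrac{1}{2}\mathrm{erfc}(\sqrt{x})\,x^m\,dx = \frac{\Gamma(m+3/2)}{2(m+1)\sqrt{\pi}},
\]
which follows from a single integration by parts.

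For $P_{sr_*}$, I would Taylor-expand the exponentials $e^{-x/\bar{\gamma}_{sr}}$ and $e^{-ix/\bar{\gamma}}$ in the explicit formula for $p_{\gamma_{sr_*}}(x)$ (obtained from (\ref{eqPDF}) by interchanging $\bar{\gamma}_{sr}$ and $\bar{\gamma}_{rd}$) and collect the coefficient of $x^n$. After simplifying the factor $(\bar{\gamma}^n - i^n\bar{\gamma}_{sr}^n)/(i\bar{\gamma}_{sr}-\bar{\gamma})$ as a geometric telescoping sum, the $x^n$ coefficient reduces to a combination of sums of the form $\sum_{i=1}^K\binom{K}{i}(-1)^{i-1}i^\ell$. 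By the classical identity $\sum_{i=0}^{K}\binom{K}{i}(-1)^{i}i^\ell = (-1)^K K!\,S(\ell,K)$, where $S(\ell,K)$ denotes a Stirling number of the second kind, all such sums vanish for $\ell<K$, so every Taylor coefficient of $p_{\gamma_{sr_*}}$ strictly below order $x^{K-1}$ cancels. The first surviving term originates from the $i^K$ contribution of the second sum in (\ref{eqPDF}) and equals $\frac{K}{\bar{\gamma}_{sr}\bar{\gamma}^{K-1}}x^{K-1}$. Substituting this into the $\mathrm{erfc}$ integral identity with $m=K-1$ and multiplying by $P_{prop}$ reproduces the first term of (\ref{eqLemma1}).

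For $P_{mrc}$, rather than Taylor-expanding the long expression (\ref{PDFbeta}) directly, it is cleaner to start from the convolution definition (\ref{pBB}). Since $p_{\gamma_{sd}}(x)\to 1/\bar{\gamma}_{sd}$ as $x\to 0$ and, by the same argument as above applied without the swap, $p_{\gamma_{r_*d}}(u) \sim \frac{K u^{K-1}}{\bar{\gamma}_{rd}\bar{\gamma}^{K-1}}$, the leading behaviour of the convolution is
\[
p_\beta(\beta) \sim \frac{1}{\bar{\gamma}_{sd}}\cdot \frac{K}{\bar{\gamma}_{rd}\bar{\gamma}^{K-1}}\int_0^\beta (\beta-x)^{K-1}\,dx = \frac{\beta^K}{\bar{\gamma}_{sd}\bar{\gamma}_{rd}\bar{\gamma}^{K-1}}.
\]
Applying the integral identity with $m=K$ then yields the second term $\frac{\Gamma(K+3/2)}{2(K+1)\sqrt{\pi}\,\bar{\gamma}_{sd}\bar{\gamma}_{rd}\bar{\gamma}^{K-1}}$ of (\ref{eqLemma1}), completing the approximation.

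The main obstacle is the combinatorial cancellation in the PDF expansion: showing rigorously that every coefficient of $x^n$ in $p_{\gamma_{sr_*}}$ for $n<K-1$ vanishes requires collecting contributions from both sums in (\ref{eqPDF}) and invoking the Stirling-number identity in the right form. One must also check that the neglected higher-order tails — the $O(x^K)$ remainder in $p_{\gamma_{sr_*}}$ and the $O(\beta^{K+1})$ remainder in $p_\beta$ — contribute only subdominant $1/\bar{\gamma}$ corrections to the $\mathrm{erfc}$ integrals; this is a routine dominated-convergence check made easy by the exponential decay of $\mathrm{erfc}(\sqrt x)$. Once the leading PDF terms are isolated, the remainder of the derivation is a single application of the $\mathrm{erfc}$ integral identity in each case.
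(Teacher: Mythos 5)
Your proposal is correct and follows essentially the same route as the paper's Appendix~B: approximate $P_{e,FSCR}\approx P_{prop}P_{sr_*}^{\infty}+P_{mrc}^{\infty}$, replace $p_{\gamma_{sr_*}}$ and $p_{\beta}$ by their leading near-origin terms $\frac{K}{\bar{\gamma}_{sr}}\left(\frac{1}{\bar{\gamma}}\right)^{K-1}x^{K-1}$ and $\frac{1}{\bar{\gamma}_{sd}\bar{\gamma}_{rd}}\left(\frac{1}{\bar{\gamma}}\right)^{K-1}\beta^{K}$, and integrate against $\frac{1}{2}\mathrm{erfc}(\sqrt{\cdot})$. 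The only difference is that you explicitly derive these PDF approximations (via the Stirling-number cancellation and the convolution argument), whereas the paper simply states them in~(\ref{Beq2}) and~(\ref{Beq3}).
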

\begin{proof}See Appendix~\ref{AppB}.\end{proof}
%%%%%%%%%%%%%%%%%%%%%%%%%%%%%%%%%%%%%%%%%%%%%%%%%%
\subsection{Distributed Selection Combining Scheme}
The destination selects the best coming path and the end-to-end BER is found to be
%%%%%%%%%%%%%%%%%%%%
\begin{equation}\label{eqDSC1}
    P_{e,DSC} = P_{prop} P_{sr_*} + \left(1 - P_{sr_*}\right) P_{DSC},
\end{equation}
%%%%%%%%%%%%%%%%%%%%%
where $P_{prop}$ and $P_{sr_*}$ are detailed above and $P_{DSC}$ is the probability of error for the selected link communication to the destination. Based on a general result in~\cite[Eq. 32]{T04}, we can derive $P_{DSC}$ for a BPSK modulation as
\begin{equation}\label{eqDSC2}
    P_{DSC} = \frac{1}{2\sqrt{\pi}}\int_{0}^{\infty} \frac{e^{-\mathrm{z}}}{\sqrt{\mathrm{z}}}~F_{\gamma_{DSC}}(\mathrm{z})d\mathrm{z},
\end{equation}
which can be rewritten as
\begin{equation}\label{eqDSC3}
    P_{DSC} = \underbrace{\frac{1}{2\sqrt{\pi}}\int_{0}^{\infty} \frac{e^{-\mathrm{z}}}{\sqrt{\mathrm{z}}}~F_{\gamma_{r_*d}}(\mathrm{z})d\mathrm{z}}_{I_1} - \underbrace{\frac{1}{2\sqrt{\pi}}\int_{0}^{\infty} \frac{e^{-\mathrm{z}(1+1/\bar{\gamma}_{sd})}}{\sqrt{\mathrm{z}}}~F_{\gamma_{r_*d}}(\mathrm{z})d\mathrm{z}}_{I_2}.
\end{equation}
It should be noted that $I_1$ defines the probability of error for the communication link between $r_*$ and D, which can be expressed as in~(\ref{eqDSC4})
%%%%%%%%%%%%%%%%%%%%%%%%%%%%%%%%%%%%%%%%%%
\begin{figure*}[!t]
\normalsize
\begin{equation}\label{eqDSC4}
    I_1 \triangleq P_{r_*d} = \sum_{i=1}^{K} \left(_i^K\right)\frac{(-1)^{i-1}}{\bar{\gamma}_{sr}} \frac{i\bar{\gamma}}{i\bar{\gamma}_{rd}-\bar{\gamma}}\left[l\left(\frac{1}{\bar{\gamma}_{rd}}\right) - l\left(\frac{i}{\bar{\gamma}}\right)\right] + \sum_{i=1}^{K} \left(_i^K\right) \frac{i(-1)^{i-1}}{\bar{\gamma}_{rd}}~l\left(\frac{i}{\bar{\gamma}}\right),
\end{equation}
\hrulefill
\vspace*{-6pt}
\end{figure*}
%%%%%%%%%%%%%%%%%%%%%%%%%%%%%%%%%%%%%%%%%%
and $I_2$ can be derived with the help of the following identity
\begin{equation}\label{eqDSC5}
    \Theta_{a} \triangleq \frac{1}{2\sqrt{\pi}} \int_{0}^{\infty} \frac{e^{-\mathrm{z}(1+1/\bar{\gamma}_{sd})}}{\sqrt{\mathrm{z}}}~\Psi_{a}(\mathrm{z})d\mathrm{z} = \frac{1}{2a}\left[\sqrt{\frac{1}{1 + \frac{1}{\bar{\gamma}_{sd}}}} - \sqrt{\frac{1}{1 + a + \frac{1}{\bar{\gamma}_{sd}}}}~\right].
\end{equation}
%%%%%%%%%%%%%%%%%%%%%%%%%%%%%%%%%%%%%%%%%%
and be given by~(\ref{eqDSC6}).\\
%%%%%%%%%%%%%%%%%%%%%%%%%%%%%%%%%%%%%%%%%%
\begin{figure*}[!t]
\normalsize
\begin{equation}\label{eqDSC6}
    I_{2} = \sum_{i=1}^{K} \left(_i^K\right)\frac{(-1)^{i-1}}{\bar{\gamma}_{sr}} \frac{i\bar{\gamma}}{i\bar{\gamma}_{rd}-\bar{\gamma}}\left[\Theta_{\frac{1}{\bar{\gamma}_{rd}}} - \Theta_{\frac{i}{\bar{\gamma}}}\right] + \sum_{i=1}^{K} \left(_i^K\right) \frac{i(-1)^{i-1}}{\bar{\gamma}_{rd}}~\Theta_{\frac{i}{\bar{\gamma}}},
\end{equation}
\hrulefill
\vspace*{-6pt}
\end{figure*}
%%%%%%%%%%%%%%%%%%%%%%%%%%%%%%%%%%%%%%%%%%
Finally, with (\ref{eqSCR2}), (\ref{eqbar}), (\ref{eqSCR4}), (\ref{eqDSC4}) and (\ref{eqDSC6}), the end-to-end probability of error can be easily evaluated.
%%%%%%%%%%%%%%%%%%%%%%%%%%%%%%%%%%%%%%%%%%%%%%%%%%%%%
\begin{lemma}\label{lemma2} For a source-destination pair with $K$ potential relays in Rayleigh fading channels, the end-to-end BER of the distributed selection combining scheme in the high-SNR regime, is
\begin{equation}\label{eqLemma2}
    P_{e,DSC} \approx P_{prop}\frac{\Gamma(K+\frac{1}{2})}{2\sqrt{\pi}}\frac{1}{\bar{\gamma}_{sr}}\left(\frac{1}{\bar{\gamma}}\right)^{K-1} + \frac{\Gamma(K+\frac{3}{2})}{2\sqrt{\pi}}\frac{1}{\bar{\gamma}_{sd}}\frac{1}{\bar{\gamma}_{rd}}\left(\frac{1}{\bar{\gamma}}\right)^{K-1}.
\end{equation}
\end{lemma}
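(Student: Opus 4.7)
The plan is to mirror the proof of Lemma~\ref{lemma1} given in Appendix~\ref{AppB}, since $P_{e,DSC}$ shares the same additive decomposition as $P_{e,FSCR}$; only the second contribution changes because the destination now performs selection combining instead of MRC.

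First I would start from (\ref{eqDSC1}) and exploit the fact that $P_{sr_*}$ decays with full diversity $K$ at high SNR, so that $1-P_{sr_*}\to 1$ and the two leading contributions to $P_{e,DSC}$ reduce to $P_{prop}P_{sr_*}$ and $P_{DSC}$. The first piece is identical to the one in Lemma~\ref{lemma1}, because $P_{sr_*}$ depends only on the source-to-relay link and is insensitive to the destination combining rule. I would therefore quote the asymptotic $P_{sr_*}\sim \frac{\Gamma(K+1/2)}{2\sqrt{\pi}}\frac{1}{\bar{\gamma}_{sr}}\bigl(\tfrac{1}{\bar{\gamma}}\bigr)^{K-1}$ already established in Appendix~\ref{AppB}; multiplying by $P_{prop}$ yields the first summand of (\ref{eqLemma2}).

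Next I would attack $P_{DSC}$ through the CDF-based integral (\ref{eqDSC2}) combined with the product form (\ref{cdfDSC}),
\begin{equation*}
P_{DSC}=\frac{1}{2\sqrt{\pi}}\int_{0}^{\infty}\frac{e^{-z}}{\sqrt{z}}\,F_{\gamma_{sd}}(z)\,F_{\gamma_{r_*d}}(z)\,dz.
\end{equation*}
At high SNR I would Taylor-expand $F_{\gamma_{sd}}(z)=1-e^{-z/\bar{\gamma}_{sd}}\sim z/\bar{\gamma}_{sd}$ and use the leading-order expansion $F_{\gamma_{r_*d}}(z)\sim z^{K}/(\bar{\gamma}_{rd}\,\bar{\gamma}^{\,K-1})$, which is precisely the same limit used in Appendix~\ref{AppB} to asymptote $p_\beta$. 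Substituting the two expansions reduces the integral to $\int_{0}^{\infty}z^{K+1/2}e^{-z}\,dz=\Gamma(K+3/2)$, producing the second summand of (\ref{eqLemma2}). The absence of the $1/(K+1)$ factor seen in Lemma~\ref{lemma1} is expected: there it arose from the MRC-specific identity $\tfrac{1}{2}\int_{0}^{\infty}\mathrm{erfc}(\sqrt{\beta})\beta^{K}\,d\beta=\Gamma(K+3/2)/[\sqrt{\pi}(K+1)]$, whereas the CDF route used here produces a pure Gamma integral without that penalty, correctly reflecting that SC is strictly weaker than MRC.

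The main technical obstacle is establishing $F_{\gamma_{r_*d}}(z)\sim z^{K}/(\bar{\gamma}_{rd}\bar{\gamma}^{\,K-1})$ from the alternating binomial sum in (\ref{eqCDFrd}). This requires expanding each $\Psi_a(z)=z-az^{2}/2+\cdots$ and invoking the classical identity $\sum_{i=1}^{K}(-1)^{i-1}\binom{K}{i}i^{\,j}=0$ for $0\le j<K$ to annihilate every power of $z$ below $z^{K}$, the sole surviving contribution coming from $j=K$. The prefactor $i\bar{\gamma}/(i\bar{\gamma}_{rd}-\bar{\gamma})$ is delicate since the linear coefficient of $\Psi_{1/\bar{\gamma}_{rd}}-\Psi_{i/\bar{\gamma}}$ and the denominator $i\bar{\gamma}_{rd}-\bar{\gamma}$ vanish simultaneously, but after regrouping the combined coefficient becomes polynomial in $i$ and the same combinatorial identity applies. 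Since this computation has already been carried out in Appendix~\ref{AppB}, I would reuse it verbatim and then assemble the two summands to obtain (\ref{eqLemma2}).
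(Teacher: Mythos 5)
Your proposal follows essentially the same route as the paper's Appendix~\ref{AppC}: write $P_{e,DSC}^{\infty}=P_{prop}P_{sr_*}^{\infty}+P_{DSC}^{\infty}$, reuse the Appendix~\ref{AppB} asymptote of $P_{sr_*}$, and evaluate $P_{DSC}$ from (\ref{eqDSC2}) with the leading-order CDFs $F_{\gamma_{sd}}(\mathrm{z})\approx \mathrm{z}/\bar{\gamma}_{sd}$ and $F_{\gamma_{r_*d}}(\mathrm{z})\approx \mathrm{z}^{K}/\bigl(\bar{\gamma}_{rd}\,\bar{\gamma}^{K-1}\bigr)$ together with $\int_0^\infty \mathrm{z}^{K+\frac{1}{2}}e^{-\mathrm{z}}d\mathrm{z}=\Gamma\bigl(K+\tfrac{3}{2}\bigr)$. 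Your version is correct and in fact slightly more careful, since you justify the $\mathrm{z}^{K}$ expansion via the alternating binomial identity (which the paper omits) and your exponent $\left(1/\bar{\gamma}\right)^{K-1}$ in the CDF approximation is the one consistent with (\ref{eqLemma2}), whereas the paper's (\ref{Ceq3}) displays it with an apparent typo.
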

\begin{proof}See Appendix~\ref{AppC}.\end{proof}
%%%%%%%%%%%%%%%%%%%%%%%%%%%%%%%%%%%%%%%%%%%%%%%%%%
\subsection{Selection Relaying}\label{SRan}
In this form of relaying it is assumed that the direct path is unusable due to the deep fade instances or heavy shadowing~\cite{Adal08}. Therefore the end-to-end BER is found to be
\begin{equation}\label{eqSR1}
    P_{e,SR} = P_{sr_*} + P_{r_*d} - P_{sr_*}P_{r_*d},
\end{equation}
where $P_{sr_*}$ was already defined by~(\ref{eqSCR4}), and $P_{r_*d}$ is the probability of error for the communication link between $r_*$ and D which can be expressed by~(\ref{eqDSC4}).
%%%%%%%%%%%%%%%%%%%%%%%%%%%%%%%%%%%%%%%%%%%%%%%%%%%%%
\begin{lemma}\label{lemma3} For a source-destination pair with $K$ potential relays in Rayleigh fading channels, the end-to-end BER of the selection relaying scheme in the high-SNR regime, is
\begin{equation}\label{eqLemma3}
    P_{e,SR} \approx \frac{\Gamma(K+\frac{1}{2})}{2\sqrt{\pi}}~\left(\frac{1}{\bar{\gamma}}\right)^{K}.
\end{equation}
\end{lemma}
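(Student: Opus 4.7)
The plan is to start from the exact expression (\ref{eqSR1}),
\begin{equation*}
P_{e,SR} \;=\; P_{sr_*} + P_{r_*d} - P_{sr_*}P_{r_*d},
\end{equation*}
and reduce the asymptotic analysis to that of the two marginal link error probabilities. The first observation is that both $P_{sr_*}$ and $P_{r_*d}$ already inherit the full order-$K$ selection diversity (the best relay is chosen by $\max_k \min(\gamma_{sr_k},\gamma_{r_kd})$), so that each is $O\!\left(\bar\gamma^{-K}\right)$ in the high-SNR regime. Consequently the cross term $P_{sr_*}P_{r_*d}$ is $O\!\left(\bar\gamma^{-2K}\right)$ and may be dropped, leaving $P_{e,SR}\approx P_{sr_*}+P_{r_*d}$.

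Next, I would reuse the asymptotic treatment already carried out in Appendix~\ref{AppB} for Lemma~\ref{lemma1}: the first summand of (\ref{eqLemma1}) is exactly the high-SNR expansion of $P_{sr_*}$ multiplied by $P_{prop}$, so the proof of Lemma~\ref{lemma1} implicitly shows
\begin{equation*}
P_{sr_*} \;\approx\; \frac{\Gamma(K+\tfrac{1}{2})}{2\sqrt{\pi}}\,\frac{1}{\bar\gamma_{sr}}\left(\frac{1}{\bar\gamma}\right)^{K-1}.
\end{equation*}
For $P_{r_*d}$ the derivation is formally identical: the expression (\ref{eqDSC4}) has the same structure as (\ref{eqSCR4}) with $\bar\gamma_{sr}$ and $\bar\gamma_{rd}$ interchanged (this mirror symmetry comes directly from the PDFs (\ref{eqPDF}) and its counterpart for $\gamma_{sr_*}$). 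Hence, by the same chain of small-argument Taylor expansions of $l(\cdot)$ and a binomial resummation of the alternating $(-1)^{i-1}\binom{K}{i}$ terms, one obtains
\begin{equation*}
P_{r_*d} \;\approx\; \frac{\Gamma(K+\tfrac{1}{2})}{2\sqrt{\pi}}\,\frac{1}{\bar\gamma_{rd}}\left(\frac{1}{\bar\gamma}\right)^{K-1}.
\end{equation*}

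Finally, I would combine the two asymptotics and use the elementary identity
\begin{equation*}
\frac{1}{\bar\gamma_{sr}}+\frac{1}{\bar\gamma_{rd}}\;=\;\frac{1}{\bar\gamma},
\end{equation*}
which follows directly from $\bar\gamma=\bar\gamma_{sr}\bar\gamma_{rd}/(\bar\gamma_{sr}+\bar\gamma_{rd})$. Adding the two expansions then yields exactly (\ref{eqLemma3}), confirming diversity order $K$.

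The main obstacle is the asymptotic collapse of the alternating binomial sum in (\ref{eqSCR4}) (and its mirror image) to the compact prefactor $\Gamma(K+\tfrac{1}{2})/(2\sqrt{\pi})$. This is the nontrivial combinatorial identity that Appendix~\ref{AppB} already establishes, so my proof would explicitly invoke that computation rather than rederive it, noting only that the same manipulation applies verbatim after swapping $\bar\gamma_{sr}\leftrightarrow\bar\gamma_{rd}$. Once that is granted, the rest is the simple additive bookkeeping described above.
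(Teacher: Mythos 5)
Your argument is correct, and it is worth noting that the paper itself supplies no proof of Lemma~\ref{lemma3} (there is no appendix attached to it), so your derivation fills that gap rather than paralleling an existing one. Each step checks out: dropping the cross term $P_{sr_*}P_{r_*d}$ is legitimate since each factor decays like $\mathrm{SNR}^{-K}$; the mirror symmetry between (\ref{eqSCR4}) and (\ref{eqDSC4}) under $\bar\gamma_{sr}\leftrightarrow\bar\gamma_{rd}$ is exactly right (the selection rule $\max_k\min(\gamma_{sr_k},\gamma_{r_kd})$ treats the two hops symmetrically, so $p_{\gamma_{sr_*}}$ is obtained from (\ref{eqPDF}) by the swap); and the final collapse via $1/\bar\gamma_{sr}+1/\bar\gamma_{rd}=1/\bar\gamma$ gives (\ref{eqLemma3}) exactly. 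One small mischaracterization: the ``nontrivial combinatorial identity'' you defer to Appendix~\ref{AppB} is not actually handled there by Taylor-expanding $l(\cdot)$ and resumming the alternating binomial series; Appendix~\ref{AppB} instead works at the level of the distributions, replacing $p_{\gamma_{sr_*}}$ by its small-argument form (\ref{Beq2}), $p_{\gamma_{sr_*}}(\mathrm{x})\approx \frac{K}{\bar\gamma_{sr}}(1/\bar\gamma)^{K-1}\mathrm{x}^{K-1}$, and integrating against $\frac{1}{2}\mathrm{erfc}(\sqrt{\mathrm{x}})$, which yields the prefactor $\Gamma(K+\frac{1}{2})/(2\sqrt{\pi})$ through the elementary integral $\int_0^\infty \mathrm{x}^{K-1}\,\frac{1}{2}\mathrm{erfc}(\sqrt{\mathrm{x}})\,d\mathrm{x}=\frac{\Gamma(K+\frac{1}{2})}{2K\sqrt{\pi}}$. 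The same PDF-level expansion applied to $p_{\gamma_{r_*d}}$ (dominant term $\frac{K}{\bar\gamma_{rd}}(1/\bar\gamma)^{K-1}\mathrm{x}^{K-1}$, consistent with (\ref{Ceq3}) up to the paper's evident typo in the exponent of $1/\bar\gamma$) gives your asymptote for $P_{r_*d}$ directly, so your appeal to symmetry is justified; just cite the mechanism correctly.
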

%%%%%%%%%%%%%%%%%%%%%%%%%%%%%%%%%%%%%%%%%%%%%%%%%%%%%%%%%%%%%%%%%%%%%%%%%%
\section{Performance results}\label{simulations}
\subsection{Network Geometry}
We anticipate that cooperation will perform differently as function
of the positions of the mobiles with respect to the destination. Hence
we study an {\em asymmetric} or {\em linear} network (LN) where we model the path-loss, i.e. the mean channel powers $\sigma_{ij}^2$,
as a function of the relays cluster position $d$ by
\begin{equation}\label{eqsigma}
  \sigma_{sd}^2 = 1,~ \sigma_{sr_k}^2 = d^{-\nu},~ \sigma_{r_{k}d}^2 = (1-d)^{-\nu},
\end{equation}
where $\nu$ is the path loss exponent and $0 < d (= \textrm{distance}_{s-cluster}) < 1$. The
distances are normalized by the distance $d_{sd}$. In these
coordinates, the source can be located at (0,0), the destination can
be located at (1,0), without loss of generality, and the relays are
located at ($d$,0).
%%%%%%%%%%%%%%%%%%%%%%%%%%%%%%%%%%%%%%%%%%%%%%%%%
\subsection{Simulation Results}
\label{SimRes}In this section, we evaluate the performance of our schemes in terms of the end-to-end BER at the destination as function of the SNR $= E_b/N_0$ for a number ($K$) of potential relays in phase II. All schemes were simulated assuming BPSK modulation. It is also assumed that the amplitudes of the fading from each transmit antenna to each receive antenna are uncorrelated in the case of cooperative selection relaying scheme and Rayleigh distributed. Furthermore, we assumed that all receivers have the same noise properties. This implies that in all depicted figures, the noise power of all paths is the same. Further, we assumed that the receiver has perfect knowledge of the channels.\\
Figures~\ref{fig:fig1}-\ref{fig:fig3} depict the end-to-end error-rate performance and the corresponding asymptotic curves in LN networks as function of SNR for FSCR, DSC and SR schemes, respectively, where a relays cluster is located at different distances $d$ from the source. Figure~\ref{fig:fig4} shows performance comparison between FSCR and DSC schemes. All figures compare the analytical and simulation results for $K = 2$ and $K = 4$, respectively.\\
Figures~\ref{fig:fig1} and \ref{fig:fig2} depict the end-to-end BER as function of the SNR for the FSCR and DSC schemes, respectively, where the relays cluster is located at $d = 0.1$ and $d = 0.5$, respectively. It could be noted that the diversity order is $K$ when the relay cluster is located at the mid-distance to the destination. This is due to the fact that the error propagation probability, $P_{prop}$, is becoming close to 1. The full diversity order is recovered when $d = 0.1$. In addition, It may be noted that the gap, between FSCR and DSC cures, is shrinked when $d = 0.5$. Therefore, DSC scheme could be considered as appropriate since its BER penalty is minor as shown in figure~\ref{fig:fig4}, and it is considered as the less complicated than MRC~\cite{AlouiniBook}.\\
Figure~\ref{fig:fig3} depicts the end-to-end BER as function of the SNR for the SR scheme for the same network architectures. We note that the SR scheme do better when the relay cluster is located in the middle between the the source and the destination, and the full diversity order is achieved as expected by~(\ref{eqLemma3}). Our proposed analysis is well confirmed by the simulation results.
%%%%%%%%%%%%%%%%%%%%%%%%%%%%%%%%%%%%%%%%
\section{Conclusions}
\label{conclu}In this work, we studied three opportunistic cooperation protocols namely fixed selection cooperative relaying, distributed selection combining and selection relaying, based on DF transmission in a Rayleigh fading environment. We provided exact statistics, and as result, we presented the BER performance analysis as well as the asymptotic analysis. We performed several simulations to confirm our theoretical analysis.
%%%%%%%%%%%%%%%%%%%%%%%%%%%%%%%%%%%%%%%%%%%%%%%%%%%%%%%%%%%%%%%%%%%%%%%%%%%%%%%%%%%%%%%%%%%%%%%%
\appendix
%%%%%%%%%%%%%%%%%%%%%%%%%%%%%%%%%%%%%%%%%%%%%%%%%
\subsection{Derivation of Eq.~(\ref{eqPDF})}\label{AppA}
Based on our problem formulation we can write the PDF $p_{\gamma_{r_*d}}(\mathrm{x})$ as follows
\begin{equation}\label{A1eq1}
    p_{\gamma_{r_*d}}(\mathrm{x}) = \int_0^{\infty} p_{\gamma_{r_id}/Z_i = \mathrm{z}}(\mathrm{x}) . p_{\max(Z_i)}(\mathrm{z}) d\mathrm{z},
\end{equation}
where $Z_i = \min(\gamma_{sr_i},\gamma_{r_id})$. \\
Using the Bayes rule, it is well known that the conditional probability density function can be expressed as
\begin{equation}\label{A1eq2}
    p_{\gamma_{r_id}/Z_i = \mathrm{z}}(\mathrm{x}) = \frac{p_{\gamma_{r_id},Z_i}(\mathrm{x},\mathrm{z})}{p_{Z_i}(\mathrm{z})}.
\end{equation}
Now, we can show that the cumulative density function (CDF) of $Z_i$ can be expressed as
\begin{equation}\label{A1eq3}
    F_{Z_i}(\mathrm{z}) = 1 - Pr[\gamma_{sr_i} \geq \mathrm{z}]Pr[\gamma_{r_id} \geq \mathrm{z}] = 1 - e^{-\mathrm{z}/\bar{\gamma}},
\end{equation}
where $\bar{\gamma} = \frac{\bar{\gamma}_{sr}\bar{\gamma}_{rd}}{\bar{\gamma}_{sr}+\bar{\gamma}_{rd}}$, and the joint CDF $F_{\gamma_{r_id},Z_i}(\mathrm{x},\mathrm{z})$ can be expressed as
%\begin{equation}
%    F_{Z_i}(\mathrm{z}) = Pr[Z_i < \mathrm{z}] = F_{\gamma_{sr_i}}(\mathrm{z}) + F_{\gamma_{r_id}}(\mathrm{z}) - F_{\gamma_{sr_i}}(\mathrm{z}).F_{\gamma_{r_id}}(\mathrm{z})
%\end{equation}
\begin{eqnarray}\label{A1eq4}
   &&{} F_{\gamma_{r_id},Z_i}(\mathrm{x},\mathrm{z}) = Pr[\gamma_{r_id}<\mathrm{x},\min(\gamma_{sr_i},\gamma_{r_id})<\mathrm{z}] \nonumber \\
                                                 &&{=} \left\{\begin{array}{ll}
                                                              F_{\gamma_{r_id}}(\mathrm{x})-\left(F_{\gamma_{r_id}}(\mathrm{x})-F_{\gamma_{r_id}}(\mathrm{z})\right)\left(1-F_{\gamma_{sr_i}}(\mathrm{z})\right), & \mathrm{x} \geq \mathrm{z} \\
                                                              F_{\gamma_{r_id}}(\mathrm{x}), & \mathrm{x} < \mathrm{z}
                                                            \end{array}
                                                  \right.
\end{eqnarray}
We note that the joint CDF $F_{\gamma_{r_id},Z_i}(\mathrm{x},\mathrm{z})$ is not continuous along the $\mathrm{x}$ direction at $\mathrm{x} = \mathrm{z}$. Therefore, the result of the derivative (i.e the joint PDF $p_{\gamma_{r_id},Z_i}(\mathrm{x},\mathrm{z})$) involves an impulse at the position $\mathrm{x} = \mathrm{z}$. Specially, the joint PDF $p_{\gamma_{r_id},Z_i}(\mathrm{x},\mathrm{z})$ is given by
\begin{equation}
   p_{\gamma_{r_id},Z_i}(\mathrm{x},\mathrm{z}) = \left\{\begin{array}{ll}
                                                              p_{\gamma_{r_id}}(\mathrm{x})p_{\gamma_{sr_i}}(\mathrm{z}) + p_{\gamma_{r_id}}(\mathrm{x})\left(1-F_{\gamma_{sr_i}}(\mathrm{z})\right)\delta(\mathrm{x}-\mathrm{z}) , & \mathrm{x} \geq \mathrm{z} \\
                                                              0, & \mathrm{x} < \mathrm{z}
                                                            \end{array}
                                                  \right.
\end{equation}
It follows the PDF of $\gamma_{r_*d}$ is given by
%%%%%%%%%%%%%%%%%%%%%%%%%%%%%%%%%%%%%%%%%%
\begin{equation}\label{A1eq6}
    p_{\gamma_{r_*d}}(\mathrm{x}) = \int_0^{\mathrm{x}} \frac{p_{\gamma_{r_id}}(\mathrm{x})p_{\gamma_{sr_i}}(\mathrm{z})}{p_{Z_i}(\mathrm{z})}~p_{\max(Z_i)}(\mathrm{z}) d\mathrm{z} + \frac{p_{\gamma_{r_id}}(\mathrm{x}) \left(1-F_{\gamma_{sr_i}}(\mathrm{x})\right)}{p_{Z_i}(\mathrm{x})}~p_{\max(Z_i)}(\mathrm{x})
\end{equation}
%%%%%%%%%%%%%%%%%%%%%%%%%%%%%%%%%%%%%%%%%%
where
\begin{equation}\label{A1eq7}
    p_{Z_i}(\mathrm{z}) = \frac{1}{\bar{\gamma}} e^{-\mathrm{z}/\bar{\gamma}},
\end{equation}
%%%%%%%%%%%%%%%%%%%%%%%%%%%%%%%%
It can be easily shown that $F_{\max(Z_i)}(\mathrm{z})$ be expressed as
\begin{equation}\label{A1eq8}
    F_{\max(Z_i)}(\mathrm{z}) = \prod_{i=1}^{K} Pr\left[Z_i < \mathrm{z}\right] = \left(1 - e^{-\mathrm{z}/\bar{\gamma}}\right)^K
\end{equation}
Therefore after taking the derivative, we have
\begin{equation}\label{A1eq9}
    p_{\max(Z_i)}(\mathrm{z}) = \frac{K}{\bar{\gamma}}e^{-\mathrm{z}/\bar{\gamma}}\left(1 - e^{-\mathrm{z}/\bar{\gamma}}\right)^{K-1} = \sum_{i=1}^{K} \left(_i^K\right)~(-1)^{i-1}~\frac{i}{\bar{\gamma}}~e^{-i\mathrm{z}/\bar{\gamma}}.
\end{equation}
where the second equality holds from the binomial expansion.\\
Substituting (\ref{A1eq7}) and (\ref{A1eq9}) in (\ref{A1eq1}), Eq.~(\ref{eqPDF}) is derived by performing the integration.
%%%%%%%%%%%%%%%%%%%%%%%%%%%%%%%%%%%%%%%%%%%%%%%%%%%%%%%%%%%%%%%%%%%%%%%%%%%%%%%%%
\subsection{Derivation of Eq.~(\ref{eqLemma1})}\label{AppB}
It is easy to note that $P_{e,FSCR}$ could be approximated by
\begin{equation}\label{Beq1}
    P_{e,FSCR}^{\infty} = P_{prop} P_{sr_*}^{\infty} + P_{mrc}^{\infty}
\end{equation}
where $P_{sr_*}^{\infty}$ and $P_{mrc}^{\infty}$ are the approximated expressions of $P_{sr_*}$ and $P_{mrc}$, respectively.\\
To this end, we needed the approximated expressions of the PDFS $p_{\gamma_{sr_*}}(.)$ and $p_{\beta}(.)$ which are given by
\begin{equation}\label{Beq2}
    p_{\gamma_{sr_*}}(\mathrm{x}) \approx \frac{K}{\bar{\gamma}_{sr}} \left(\frac{1}{\bar{\gamma}}\right)^{K-1} \mathrm{y}^{K-1}, ~\mathrm{y} > 0,
\end{equation}
and
\begin{equation}\label{Beq3}
    p_{\beta}(\beta) \approx \frac{1}{\gamma_{sd}}\frac{1}{\bar{\gamma}_{rd}}\left(\frac{1}{\bar{\gamma}}\right)^{K-1} \beta^{K}, ~\mathrm{\beta} > 0
\end{equation}
Based on (\ref{Beq2}) and (\ref{Beq3}), it becomes easy to derive $P_{sr_*}^{\infty}$ and $P_{mrc}^{\infty}$ by using integrations in (\ref{eqSCR3}) and (\ref{MRCint}).
%%%%%%%%%%%%%%%%%%%%%%%%%%%%%%%%%%%%%%%%%%%%%%%%%%%%%%%%%%%%%%%%%%%%%%%%%%%%%%%%%
\subsection{Derivation of Eq.~(\ref{eqLemma2})}\label{AppC}
It is easy to note that $P_{e,DSC}$ could be approximated by
\begin{equation}\label{Ceq1}
    P_{e,DSC}^{\infty} = P_{prop} P_{sr_*}^{\infty} + P_{DSC}^{\infty}
\end{equation}
where $P_prop$ and $P_{sr_*}^{\infty}$ are already given, and $P_{DSC}^{\infty}$ is the approximated expressions of $P_{DSC}$ which can be derived by integrating the approximated expression of $F_{\gamma_{DSC}}(.)$. For this end, let us start by approximating $F_{\gamma_{sd}}(.)$ and $F_{\gamma_{r_*d}}(.)$, given by
\begin{equation}\label{Ceq2}
    F_{\gamma_{sd}} \approx \frac{\mathrm{z}}{\bar{\gamma}_{sd}},~\mathrm{z} > 0,
\end{equation}
and
\begin{equation}\label{Ceq3}
    F_{\gamma_{r_*d}}(\mathrm{z}) \approx \frac{1}{\bar{\gamma}_{rd}}~\frac{1}{\bar{\gamma}}~\mathrm{z}^{K},~\mathrm{z} > 0.
\end{equation}
Using the following identity
\begin{equation}\label{Ceq4}
    \int_0^{\infty} \mathrm{z}^{K+\frac{1}{2}} e^{-\mathrm{z}} d\mathrm{z} = \Gamma\left(K + \frac{3}{2}\right),
\end{equation}
and substituting (\ref{Ceq2}) and (\ref{Ceq3}) in (\ref{eqDSC2}), Eq.~(\ref{eqLemma2}) is derived by performing the integration.
%%%%%%%%%%%%%%%%%%%%%%%%%%%%%%%%%% Biblio %%%%%%%%%%%%%%%%%%%%%%%%%%%%%%%%%%%%%%
\bibliographystyle{IEEEtran}
\bibliography{IEEEabrv,refJrlTYA}

\begin{thebibliography}{10}
\providecommand{\url}[1]{#1}
\csname url@rmstyle\endcsname
\providecommand{\newblock}{\relax}
\providecommand{\bibinfo}[2]{#2}
\providecommand\BIBentrySTDinterwordspacing{\spaceskip=0pt\relax}
\providecommand\BIBentryALTinterwordstretchfactor{4}
\providecommand\BIBentryALTinterwordspacing{\spaceskip=\fontdimen2\font plus
\BIBentryALTinterwordstretchfactor\fontdimen3\font minus
  \fontdimen4\font\relax}
\providecommand\BIBforeignlanguage[2]{{%
\expandafter\ifx\csname l@#1\endcsname\relax
\typeout{** WARNING: IEEEtran.bst: No hyphenation pattern has been}%
\typeout{** loaded for the language `#1'. Using the pattern for}%
\typeout{** the default language instead.}%
\else
\language=\csname l@#1\endcsname
\fi
#2}}

\bibitem{LaW}
J.~N. Laneman and G.~W. Wornell, ``Distributed space-time-coded protocols for
  exploiting cooperative diversity in wireless networks,'' \emph{IEEE Trans.
  Inform. Theory}, vol.~49, no.~10, pp. 2415--2425, Oct. 2003.

\bibitem{SEA1}
A.~Sendonaris, E.~Erkip, and B.~Aazhang, ``User cooperation diversity,
  \textsc{P}art {I} : System description,'' \emph{IEEE Trans. Comm.}, vol.~51,
  no.~11, pp. 1927--1938, Nov. 2003.

\bibitem{SEA2}
------, ``User cooperation diversity, \textsc{P}art {II} : Implementation
  aspects and performance analysis,'' \emph{IEEE Trans. Comm.}, vol.~51,
  no.~11, pp. 1939--1948, Nov. 2003.

\bibitem{Bal06}
A.~Bletsas, A.~Khisti, D.~P. Reed, and A.~Lippman, ``A simple cooperative
  diversity method based on network path selection,'' \emph{IEEE Journal on
  Selected Areas in Communications}, vol.~24, no.~3, pp. 659--672, Mar. 2006.

\bibitem{TAH09}
K.~Tourki, M.~S. Alouini, and M.~O. Hasna, ``Wireless transmission using
  cooperation on demand,'' \emph{Elsevier Physical Communication (2009)}, 2009,
  doi:10.1016/j.phycom.2009.10.003.

\bibitem{Kal09}
I.~Krikidis, J.~S. Thompson, S.~McLaughlin, and N.~Goertz, ``Max-min relay
  selection for legacy amplify-and-forward systems with interference,''
  \emph{IEEE Transactions on Wireless Communications}, vol.~8, no.~6, pp.
  3016--3027, Jun. 2009.

\bibitem{Bal10}
A.~Bletsas, A.~G. Dimitriou, and J.~N. Sahalos, ``Interference-limited
  opportunistic relay with reactive sensing,'' \emph{IEEE Transactions on
  Wireless Communications}, vol.~9, no.~1, pp. 14--20, Jan. 2010.

\bibitem{IA09}
S.~S. Ikki and M.~H. Ahmed, ``Exact error probability and channel capacity of
  the best-relay cooperative-diversity networks,'' \emph{IEEE Signal Processing
  Letters}, vol.~16, no.~12, pp. 1051--1054, Dec. 2009.

\bibitem{Bcal06}
A.~Bletsas, H.~Shin, M.~Z. Win, and A.~Lippman, ``Cooperative diversity with
  opportunistic relaying,'' in \emph{IEEE Wireless Communications and
  Networking Conference}, Las Vegas, USA, Apr. 2006.

\bibitem{Dal08}
Z.~Ding, Y.~Gong, T.~Ratnarajah, and C.~F.~N. Cowan, ``On the design of
  opportunistic cooperative transmission strategies with partial \textsc{CSI}
  information,'' in \emph{IEEE Vehicular Technology Conference (VTC'Spring
  08)}, Marina Bay, Singapore, May 2008.

\bibitem{Bal08}
A.~Bletsas, A.~Khisti, and M.~Z. Win, ``Opportunistic cooperative diversity
  with feedback and cheap radios,'' \emph{IEEE Transactions on Wireles
  Communications}, vol.~7, no.~5, pp. 1823--1827, May 2008.

\bibitem{ZL09}
Q.~F. Zhou, F.~C.~M. Lau, and S.~F. Hau, ``Asymptotic analysis of opportunistic
  relaying protocols,'' \emph{IEEE Transactions on Wireles Communications},
  vol.~8, no.~8, pp. 3915--3920, Aug. 2009.

\bibitem{GDC08}
B.~Gui, L.~Dai, and L.~J. Cimini, ``Selective relaying in cooperative {OFDM}
  systems: Two-hop random network,'' in \emph{Proc. IEEE Wireless comm. and
  Networking Conf. (WCNC)}, Las Vegas, USA, Apr. 2008.

\bibitem{Adal08}
A.~Adinoyi, Y.~Fan, H.~Yanikomeroglu, and H.~V. Poor, ``On the performance of
  selection relaying,'' in \emph{Proc. of the 68th IEEE Vehicular Technology
  Conference (VTC'Fall 08)}, Calgary, Canada, Sep. 2008.

\bibitem{MK08}
D.~S. Michalopoulos and G.~K. Karagiannidis, ``Performance analysis of single
  relay selection in rayleigh fading,'' \emph{IEEE Transactions on Wireless
  Communications}, vol.~7, no.~10, pp. 3718--3724, Oct. 2008.

\bibitem{FU09}
M.~M. Fareed and M.~Uysal, ``On relay selection for decode-and-forward
  relaying,'' \emph{IEEE Transactions on Wireless Communications}, vol.~8,
  no.~7, pp. 3341--3345, Jul. 2009.

\bibitem{NB09}
R.~Nikjah and N.~C. Beaulieu, ``Exact closed-form expressions for the outage
  probability and ergodic capacity of decode-and-forward opportunistic
  relaying,'' in \emph{Proc. IEEE Global Communications Conference (Globecom)},
  Honolulu, Hawaii, USA, Dec. 2009.

\bibitem{Cal10}
H.~Chen, J.~Liu, L.~Zheng, C.~Zhai, and Y.~Zhou, ``Approximate \textsc{SEP}
  analysis for \textsc{DF} cooperative networks with opportunistic relaying,''
  \emph{IEEE Signal Processing Letters}, vol.~17, no.~9, pp. 779--782, Sept.
  2010.

\bibitem{T04}
Y.~Chen and C.~Tellambura, ``Distributed functions of selection combiner output
  in equally correlated rayleigh, rician, and \texttt{N}akagami-\textit{m}
  fading channels,'' \emph{IEEE Transactions on Communications}, vol.~52,
  no.~11, pp. 1948--1956, Nov. 2004.

\bibitem{AlouiniBook}
M.~K. Simon and M.~S. Alouini, \emph{Digital Communication over Fading
  Channels}, S.~E. John~G.~Proakis, Ed.\hskip 1em plus 0.5em minus 0.4em\relax
  Wiley Series in Telecommunications and Signal Processing, 2005.

\end{thebibliography}
%%%%%%%%%%%%%%%%%%%%%%%%%%%%%%%%%%%%%%%%%%%%%%%%%%%
\begin{figure}[htbp]
\begin{center}
\includegraphics[scale=0.75]{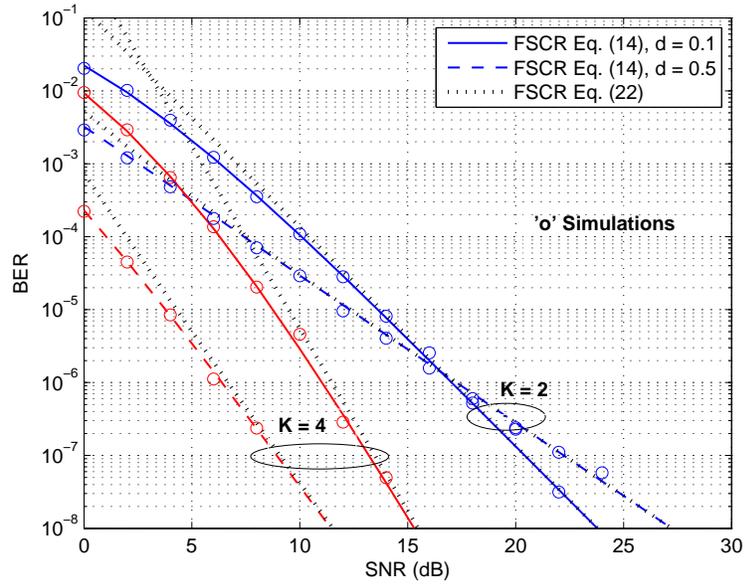}
\vspace*{-0.5 cm} \caption{End-to-end bit error rate versus SNR of the FSCR scheme using a DF transmission in the LN case for $K = 2$ and $K = 4$.}\label{fig:fig1}
\end{center}
\end{figure}
%%%%%%%%%%%%%%%%%%%%%%%%%%%%%%%%%%%%%%%%%%%%%%%%%%%
\begin{figure}[htbp]
\begin{center}
\includegraphics[scale=0.75]{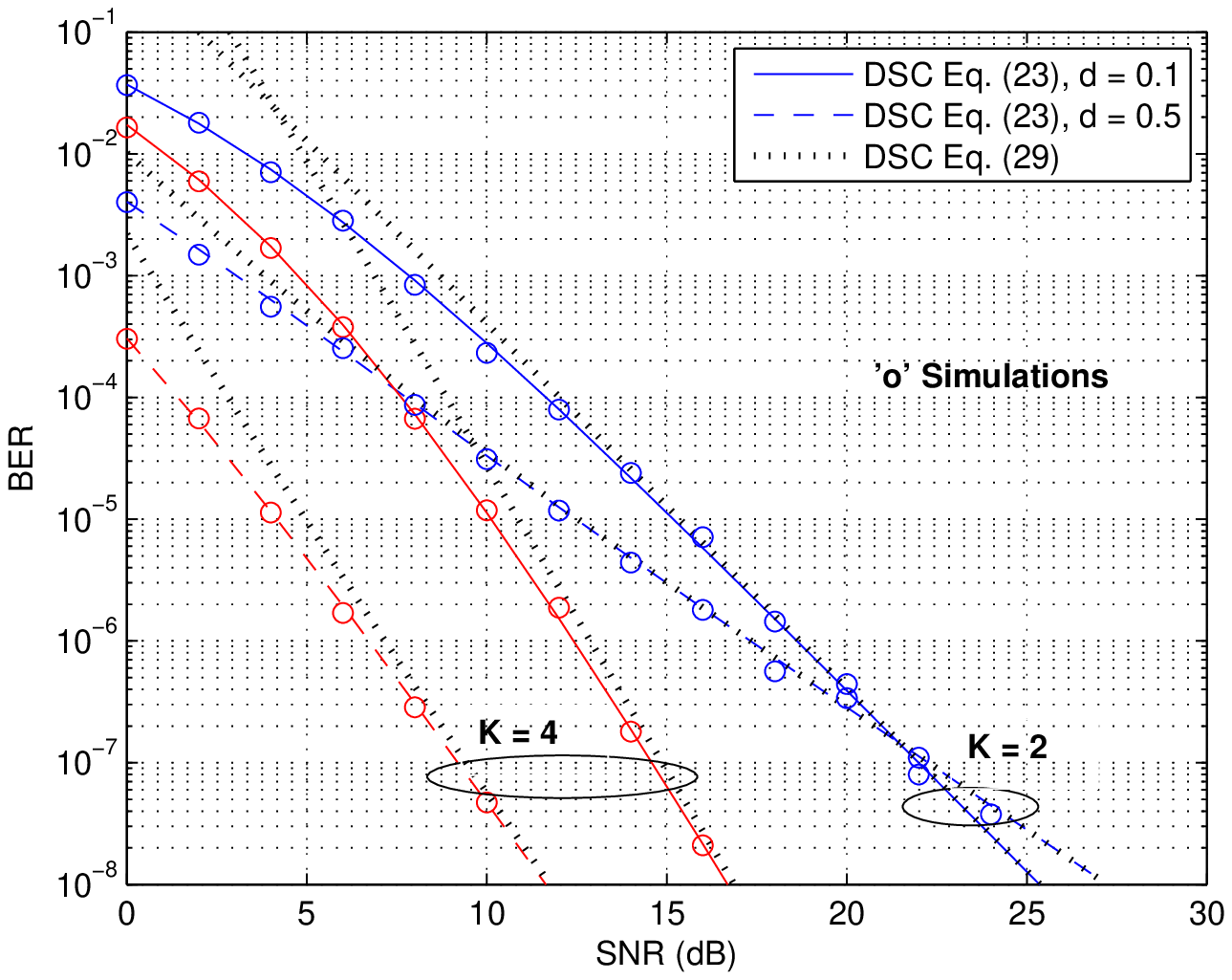}
\vspace*{-0.5 cm} \caption{End-to-end bit error rate versus SNR of the DSC scheme using a DF transmission in the LN case when $K = 2$ and $K = 4$.}\label{fig:fig2}
\end{center}
\end{figure}
%%%%%%%%%%%%%%%%%%%%%%%%%%%%%%%%%%%%%%%%%%%%%%%%%%%
\begin{figure}[htbp]
\begin{center}
\includegraphics[scale=0.75]{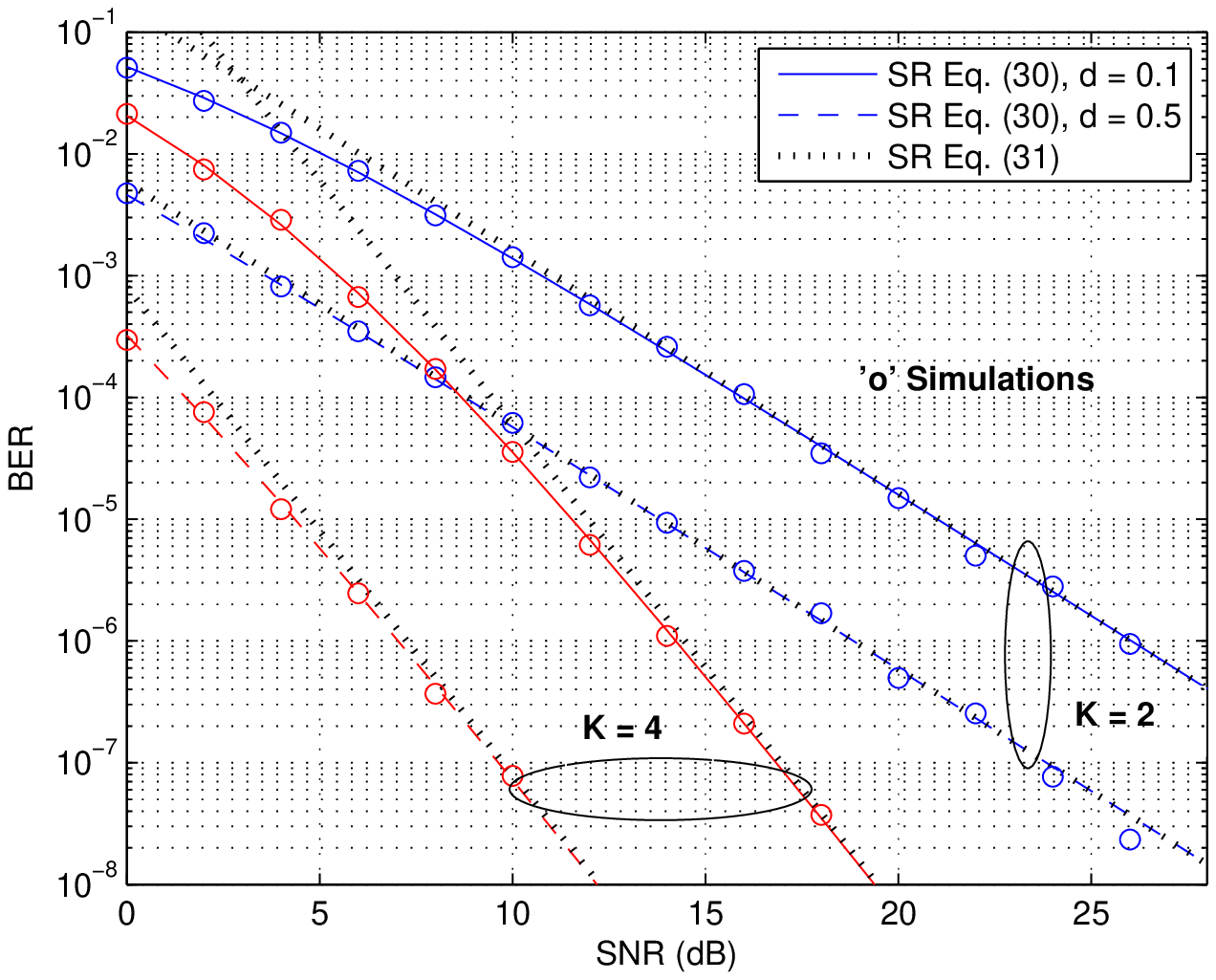}
\vspace*{-0.5 cm} \caption{End-to-end bit error-rate versus SNR of the SR scheme using a DF transmission in the LN case when $K = 2$ and $K = 4$.}\label{fig:fig3}
\end{center}
\end{figure}
%%%%%%%%%%%%%%%%%%%%%%%%%%%%%%%%%%%%%%%%%%%%%%%%%%%
\begin{figure}[htbp]
\begin{center}
\includegraphics[scale=0.75]{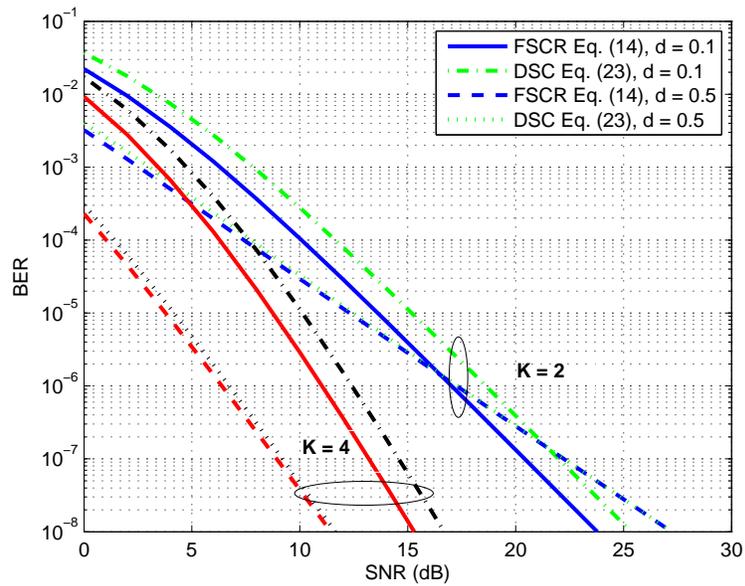}
\vspace*{-0.5 cm} \caption{Comparisons of error performance versus SNR of the FSCR and DSC schemes using a DF transmission in the LN case when $K = 2$ and $K = 4$..}\label{fig:fig4}
\end{center}
\end{figure}
%%%%%%%%%%%%%%%%%%%%%%%%%%%%%%%%%%%%%%%%%%%%%%%%%%%%%%%%%%%%%%%%%%%%%%%%%
\end{document}